\documentclass[a4paper,11pt,british]{article}
\usepackage{babel}

\usepackage[left=1.5in,right=1.5in,top=1.5in,bottom=1.5in]{geometry}

\usepackage{graphicx}
\usepackage{amsmath}
\usepackage{array}
\usepackage{url}
\usepackage{amssymb}
\usepackage{amsfonts}
\usepackage{booktabs}
\usepackage{hyperref}
\usepackage{orcidlink}
\usepackage[caption=false]{subfig}

\usepackage[ruled,vlined]{algorithm2e}
\SetKw{KwPP}{Public Parameters:}
\setlength{\algoheightrule}{1pt}
\setlength{\algotitleheightrule}{1pt}
\usepackage{mathtools}
\DeclarePairedDelimiter\ceil{\lceil}{\rceil}
\DeclarePairedDelimiter\floor{\lfloor}{\rfloor}
\SetKwComment{Comment}{\scriptsize $\triangleright$\ }{}
\makeatletter\def\dftbig#1{{\hbox{$\left#1\vbox to10\p@{}\right.\n@space$}}}\makeatother
\renewcommand{\Pr}{\mathsf{Pr}}
\newcommand{\E}{\mathsf{E}}
\newcommand{\eat}[1]{}

\usepackage{amsthm}
\usepackage{thmtools}
\usepackage{thm-restate}
\newtheorem{thm}{Theorem}[section]
\newtheorem{lem}[thm]{Lemma}
\newtheorem{cor}[thm]{Corollary}
\theoremstyle{definition}

\begin{document}

\title{Applying the Shuffle Model of Differential Privacy to Vector Aggregation}

\author{Mary Scott\textsuperscript{\orcidlink{0000-0003-0799-5840}},
Graham Cormode\textsuperscript{\orcidlink{0000-0002-0698-0922}}, and
Carsten Maple\textsuperscript{\orcidlink{0000-0002-4715-212X}}}
\date{}

\maketitle

\begin{abstract}
In this work we introduce a new protocol for vector aggregation in the context of the Shuffle Model, a recent model within Differential Privacy (DP).
It sits between the Centralized Model, which prioritizes the level of accuracy over the secrecy of the data, and the Local Model, for which an improvement in trust is counteracted by a much higher noise requirement.
The Shuffle Model was developed to provide a good balance between these two models through the addition of a shuffling step, which unbinds the users from their data whilst maintaining a moderate noise requirement.
We provide a single message protocol for the summation of real vectors in the Shuffle Model, using advanced composition results.
Our contribution provides a mechanism to enable private aggregation and analysis across more sophisticated structures such as matrices and higher-dimensional tensors, both of which are reliant on the functionality of the vector case.
\end{abstract}

\section{Introduction}  \label{sec:intro}
Differential Privacy (DP)~\cite{dworkintro} is a strong, mathematical definition of privacy that guarantees a measurable level of confidentiality for any data subject in the dataset to which it is applied.
In this way, useful collective information can be learned about a population, whilst simultaneously protecting the personal information of each data subject.

In particular, DP guarantees that the impact on any particular individual as a result of analysis on a dataset is the same, whether or not the individual is included in the dataset.
This guarantee is quantified by a parameter $\varepsilon$, which represents good privacy if it is small.
However, finding an algorithm that achieves DP often requires a trade-off between privacy and accuracy, as a smaller $\varepsilon$ sacrifices accuracy for better privacy, and vice versa.
DP enables data analyses such as the statistical analysis of the salaries of a population.
This allows useful collective information to be studied, as long as $\varepsilon$ is adjusted appropriately to satisfy the definition of DP.

In this work we focus on protocols in the \emph{Single-Message Shuffle Model}~\cite{balleprivacyblanket}, a one-time data collection model where each of $n$ users is permitted to submit a single message.
We have chosen to apply the Single-Message Shuffle Model to the problem of \emph{vector aggregation}, as there are links to Federated Learning and Secure Aggregation.

There are many practical applications of the Single-Message Shuffle Model in Federated Learning, where multiple users collaboratively solve a Machine Learning problem, the results of which simultaneously improves the model for the next round~\cite{mcmahan}.
The updates generated by the users after each round are high-dimensional vectors, so this data type will prove useful in applications such as training a Deep Neural Network to predict the next word that a user types~\cite{neural}.
Additionally, aggregation is closely related to Secure Aggregation, which can be used to compute the outputs of Machine Learning problems such as the one above~\cite{bonawitz}.

Our contribution is a protocol in the Single-Message Shuffle Model for the private summation of vector-valued messages, extending an existing result from Balle \emph{et al.}~\cite{balleprivacyblanket} by permitting the $n$ users to each submit a vector of real numbers instead of a scalar.
The resulting estimator is unbiased and has normalized mean squared error (MSE) $O_{\varepsilon, \delta} (d^{8/3} n^{-5/3})$, where $d$ is the dimension of each vector.

This vector summation protocol above can be extended to produce a similar protocol for the linearization of matrices. It is important to use matrix reduction to ensure that the constituent vectors are linearly independent. This problem can be extended further to higher-dimensional tensors, which are useful for the representation of multi-dimensional data in Neural Networks.

\section{Related Work} \label{sec:litreview}

The earliest attempts at protecting the privacy of users in a dataset focused on simple ways of suppressing or generalising the data.
Examples include $k$-anonymity~\cite{kanonymity}, $l$-diversity~\cite{ldiversity} and $t$-closeness~\cite{tcloseness}.
However, such attempts have been shown to be insufficient, as proved by numerous examples~\cite{dwork}.

This harmful leakage of sensitive information can be easily prevented through the use of DP, as this mathematically guarantees that the chance of a \emph{linkage attack} on an individual in the dataset is almost identical to that on an individual not in the dataset.

Ever since DP was first conceptualized in 2006 by Dwork \emph{et al.}~\cite{dworkintro}, the majority of research in the field has focused on two opposing models.
In the Centralized Model, users submit their sensitive personal information directly to a \emph{trusted} central data collector, who adds \emph{random noise} to the raw data to provide DP, before assembling and analyzing the aggregated results.

In the Local Model, DP is guaranteed when each user applies a \emph{local randomizer} to add random noise to their data before it is submitted. 
The Local Model differs from the Centralized Model in that the central entity does not see the users' raw data at any point, and therefore does not have to be trusted.
However, the level of noise required per user for the same privacy guarantee is much higher, which limits the usage of Local Differential Privacy (LDP) to major companies such as Google~\cite{google}, Apple~\cite{apple} and Microsoft~\cite{microsoft}.

Neither of these two extensively studied models can provide a good balance between the trust of the central entity and the level of noise required to guarantee DP. Hence, in recent years researchers have tried to create intermediate models that reap the benefits of both.

In 2017, Bittau \emph{et al.}~\cite{bittau} introduced the Encode, Shuffle, Analyze (ESA) model, which provides a general framework for the addition of a \emph{shuffling step} in a private protocol.
After the data from each user is encoded, it is randomly permuted to unbind each user from their data before analysis takes place.
In 2019, Cheu \emph{et al.}~\cite{cheu} formalized the Shuffle Model as a special case of the ESA model, which connects this additional shuffling step to the Local Model.
In the Shuffle Model, the local randomizer applies a randomized mechanism on a per-element basis, potentially replacing a truthful value with another randomly selected domain element.
The role of these independent reports is to create what we call a \emph{privacy blanket}, which masks the outputs which are reported truthfully.

As well as the result on the private summation of scalar-valued messages in the Single-Message Shuffle Model that we will be using~\cite{balleprivacyblanket}, Balle \emph{et al.} have published two more recent works that solve related problems.
The first paper~\cite{balleimproved} improved the distributed $n$-party summation protocol from Ishai \emph{et al.}~\cite{ishai} in the context of the Single-Message Shuffle Model to require $O (1 + \pi/\log n)$ scalar-valued messages, instead of a logarithmic dependency of $O (\log n + \pi)$, to achieve statistical security $2^{-\pi}$.
The second paper~\cite{ballemulti} introduced two new protocols for the private summation of scalar-valued messages in the Multi-Message Shuffle Model, an extension of the Single-Message Shuffle Model that permits each of the $n$ users to submit more than one message, using several independent shufflers to securely compute the sum.
In this work, Balle \emph{et al.} contributed a recursive construction based on the protocol from~\cite{balleprivacyblanket}, as well as an alternative mechanism which implements a discretized distributed noise addition technique using the result from Ishai \emph{et al.}~\cite{ishai}.

Also relevant to our research is the work of Ghazi \emph{et al.}~\cite{ghazipower}, which explored the related problems of private frequency estimation and selection in a similar context, drawing comparisons between the errors achieved in the Single-Message and Multi-Message Shuffle Models.
A similar team of authors produced a follow-up paper~\cite{ghazidistributed} describing a more efficient protocol for private summation in the Single-Message Shuffle Model, using the `invisibility cloak' technique to facilitate the addition of zero-sum noise without coordination between the users.

\section{Preliminaries} \label{sec:prelim}

We consider randomized mechanisms~\cite{dwork} $\mathcal{M}$, $\mathcal{R}$ under domains $\mathbb{X}$, $\mathbb{Y}$, and apply them to input datasets $\vec{D}, \vec{D}'$ to generate (vector-valued) messages $\vec{x}_{i}, \vec{x}_{i}'$. We write $[k] = \{ 1, \dots, k \}$ and $\mathbb{N}$ for the set of natural numbers.

\subsection{Models of Differential Privacy} \label{sec:dp}

The essence of Differential Privacy (DP) is the requirement that the contribution $\vec{x}_{i}$ of a user $i$ to a dataset $\vec{D} = (\vec{x}_{1}, \dots, \vec{x}_{n})$ does not have much effect on the outcome of the mechanism applied to that dataset.

In the \emph{centralized} model of DP, random noise is only introduced after the users' inputs are gathered by a (trusted) aggregator.
Consider a dataset $\vec{D}'$ that differs from $\vec{D}$ only in the contribution of a single user, denoted $\vec{D} \simeq \vec{D}'$.
Also let $\varepsilon \geq 0$ and $\delta \in (0,1)$. We say that a randomized mechanism $\mathcal{M}: \mathbb{X}^{n} \rightarrow \mathbb{Y}$ is $(\varepsilon, \delta)$-differentially private if $\forall \vec{D} \simeq \vec{D}', \forall E \subseteq \mathbb{Y}$:
\[
\Pr[\mathcal{M}(\vec{D}) \in E] \leq e^{\varepsilon} \cdot \Pr[\mathcal{M}(\vec{D}') \in E] + \delta~\cite{dwork}.
\]

In this definition, we assume that the trusted aggregator obtains the raw data from all users and introduces the necessary perturbations.

In the \emph{local} model of DP, each user $i$ independently uses randomness on their input $\vec{x}_{i} \in \mathbb{X}$ by using a \emph{local randomizer} $\mathcal{R}: \mathbb{X} \rightarrow \mathbb{Y}$ to obtain a perturbed result $\mathcal{R}(\vec{x}_{i})$. We say that the local randomizer is $(\varepsilon, \delta)$-differentially private if $\forall \vec{D}, \vec{D}', \forall E \subseteq \mathbb{Y}$:
\[
\Pr[\mathcal{R}(\vec{x}_{i}) \in E] \leq e^{\varepsilon} \cdot \Pr[\mathcal{R}(\vec{x}_{i}') \in E] + \delta~\cite{balleprivacyblanket}, 
\]

\noindent where $\vec{x}_{i}' \in \mathbb{X}$ is some other valid input vector that $i$ could hold.
The Local Model guarantees that any observer will not have access to the raw data from any of the users.
That is, it removes the requirement for trust. The price is that this requires a higher level of noise per user to achieve the same privacy guarantee.

\subsection{Single-Message Shuffle Model} \label{sec:smsm}

The Single-Message Shuffle Model sits in between the Centralized and Local Models of DP~\cite{balleprivacyblanket}.
Let a protocol $\mathcal{P}$ in the Single-Message Shuffle Model be of the form $\mathcal{P} = (\mathcal{R}, \mathcal{A})$, where $\mathcal{R}: \mathbb{X} \rightarrow \mathbb{Y}$ is the \emph{local randomizer}, and $\mathcal{A}: \mathbb{Y}^{n} \rightarrow \mathbb{Z}$ is the \emph{analyzer} of $\mathcal{P}$. 
Overall, $\mathcal{P}$ implements a mechanism $\mathcal{P}: \mathbb{X}^{n} \rightarrow \mathbb{Z}$ as follows. Each user $i$ independently applies the local randomizer to their message $\vec{x}_{i}$ to obtain a message $\vec{y}_{i} = \mathcal{R}(\vec{x}_{i})$.
Subsequently, the messages $(\vec{y}_{1}, \dots, \vec{y}_{n})$ are randomly permuted by a trusted \emph{shuffler} $\mathcal{S}: \mathbb{Y}^{n} \rightarrow \mathbb{Y}^{n}$.
The random permutation $\mathcal{S}(\vec{y}_{1}, \dots, \vec{y}_{n})$ is submitted to an untrusted data collector, who applies the analyzer $\mathcal{A}$ to obtain an output for the mechanism.
In summary, the output of $\mathcal{P}(\vec{x}_{1}, \dots, \vec{x}_{n})$ is given by:
\[
\mathcal{A} \circ \mathcal{S} \circ \mathcal{R}^{n}(\vec{x}) = \mathcal{A}(\mathcal{S}(\mathcal{R}(\vec{x}_{1}), \dots, \mathcal{R}(\vec{x}_{n}))).
\]

Note that the data collector observing the shuffled messages $\mathcal{S}(\vec{y}_{1}, \dots, \vec{y}_{n})$ obtains no information about which user generated each of the messages.
Therefore, the privacy of $\mathcal{P}$ relies on the indistinguishability between the shuffles $\mathcal{S} \circ \mathcal{R}^{n}(\vec{D})$ and $\mathcal{S} \circ \mathcal{R}^{n}(\vec{D}')$ for datasets $\vec{D} \simeq \vec{D}'$.
The analyzer can represent the shuffled messages as a \emph{histogram}, which counts the number of occurrences of the possible outputs of $\mathbb{Y}$.

\subsection{Measuring Accuracy} \label{sec:mse}

In Section~\ref{sec:vectorsum} we use the \emph{mean squared error} to compare the overall output of a private summation protocol in the Single-Message Shuffle Model with the original dataset.
The MSE is used to measure the average squared difference in the comparison between a fixed input $f(\vec{D})$ to the randomized protocol $\mathcal{P}$, and its output $\mathcal{P}(\vec{D})$.
In this context, $\text{MSE}(\mathcal{P}, \vec{D}) = \E \dftbig[ (\mathcal{P}(\vec{D}) - f(\vec{D}))^{2} \dftbig]$, where the expectation is taken over the randomness of $\mathcal{P}$. 
Note when $\E[\mathcal{P}(\vec{D})] =  f(\vec{D})$, MSE is equivalent to variance, i.e.:
\[
\text{MSE}(\mathcal{P}, \vec{D}) = \E \dftbig[ (\mathcal{P}(\vec{D}) - \E[\mathcal{P}(\vec{D})])^{2} \dftbig] = \text{Var}[\mathcal{P}(\vec{D})].
\]

\section{Vector Sum in the Shuffle Model} \label{sec:vectorsum}

In this section we introduce our protocol for vector summation in the Shuffle Model and tune its parameters to optimize accuracy.

\subsection{Basic Randomizer} \label{sec:basic}

First, we describe a basic local randomizer applied by each user $i$ to an input $x_i \in [k]$.
The output of this protocol is a (private) histogram of shuffled messages over the domain $[k]$.

The Local Randomizer $\mathcal{R}_{\gamma, k, n}^{PH}$, shown in Algorithm~\ref{alg:localrand}, applies a generalized \emph{randomized response} mechanism that returns the true message $x_i$ with probability $1 - \gamma$ and a uniformly random message with probability $\gamma$.
Such a basic randomizer is used by Balle \emph{et al.}~\cite{balleprivacyblanket} in the Single-Message Shuffle Model for scalar-valued messages, as well as in several other previous works in the Local Model~\cite{kairouzextremal, kairouzdiscrete, bhowmick}.
In Section~\ref{sec:privanalysis}, we find an appropriate $\gamma$ to optimize the proportion of random messages that are submitted, and therefore guarantee DP.

\begin{algorithm}[t]
\DontPrintSemicolon
\SetArgSty{textnormal}
\KwPP{\parbox[t]{2.1in}{\raggedright $\gamma \in [0,1]$, domain size $k$, and number of parties $n$}}\\
\KwIn{$x_i \in [k]$}
\KwOut{$y_i \in [k]$}
%\medskip
Sample $b \leftarrow$ {\tt Ber}$(\gamma)$\\
\lIf{$b=0$}{let $y_i \leftarrow x_i$}
\lElse{sample $y_i \leftarrow$ {\tt Unif}$([k])$}
\KwRet{$y_i$}
\caption{Local Randomizer $\mathcal{R}_{\gamma, k, n}^{PH}$}
\label{alg:localrand}
\end{algorithm}

We now describe how the presence of these random messages can form a `privacy blanket' to protect against a \emph{difference attack} on a particular user. 
Suppose we apply Algorithm~\ref{alg:localrand} to the messages from all $n$ users.
Note that a subset $B$ of approximately $\gamma n$ of these users returned a uniformly random message, while the remaining users returned their true message. 
Following Balle \emph{et al.}~\cite{balleprivacyblanket}, the analyzer can represent the messages sent by users in $B$ by a histogram $Y_1$ of uniformly random messages, and can form a histogram $Y_2$ of truthful messages from users not in $B$. 
As these subsets are mutually exclusive and collectively exhaustive, the information represented by the analyzer is equivalent to the histogram $Y = Y_1 \cup Y_2$.

Consider two neighbouring datasets, each consisting of $n$ messages from $n$ users, that differ only on the input from the $n^{\text{th}}$ user.
To simplify the discussion and subsequent proof, we temporarily omit the action of the shuffler.
By the post-processing property of DP, this can be reintroduced later on without adversely affecting the privacy guarantees.
To achieve DP we need to find an appropriate $\gamma$ such that when Algorithm~\ref{alg:localrand} is applied, the change in $Y$ is appropriately bounded.
As the knowledge of either the set $B$ or the messages from the first $n - 1$ users does not affect DP, we can assume that the analyzer knows both of these details.
This lets the analyzer remove all of the truthful messages associated with the first $n - 1$ users from $Y$.

If the $n^{\text{th}}$ user is in $B$, this means their submission is independent of their input, so we trivially satisfy DP.
Otherwise, the (curious) analyzer knows that the $n^{\text{th}}$ user has submitted their true message $x_n$.
The analyzer can remove all of the truthful messages associated with the first $n-1$ users from $Y$, and obtain $Y_1 \cup \{ x_n \}$.
The subsequent privacy analysis will argue that this does not reveal $x_n$ if $\gamma$ is set so that $Y_1$, the histogram of random messages, appropriately `hides' $x_n$.

\subsection{Private Summation of Vectors} \label{sec:protocol}

Here, we extend the protocol from Section~\ref{sec:basic} to address the problem of computing the sum of $n$ real vectors, each of the form $\vec{x}_{i} = (x_{i}^{(1)}, \dots, x_{i}^{(d)}) \in [0,1]^{d}$, in the Single-Message Shuffle Model.
Specifically, we analyze the utility of a protocol $\mathcal{P}_{d, k, n, t} = (\mathcal{R}_{d, k, n, t}, \mathcal{A}_{d, k, t})$ for this purpose, by using the MSE from Section~\ref{sec:mse} as the accuracy measure.
In the scalar case, each user applies the protocol to their entire input~\cite{balleprivacyblanket}.
Moving to the vector case, we allow each user to independently sample a set of $1 \le t \le d$ coordinates from their vector to report.
Our analysis allows us to optimize the parameter $t$.

Hence, the first step of the Local Randomizer $\mathcal{R}_{d, k, n, t}$, presented in Algorithm~\ref{alg:fixedpoint}, is to uniformly sample $t$ coordinates $(\alpha_{1}, \dots, \alpha_{t}) \in [d]$ (without replacement) from each vector $\vec{x}_{i}$.
To compute a differentially private approximation of $\sum_{i} \vec{x}_{i}$, we fix a quantization level $k$.
Then we randomly round each $x_{i}^{(\alpha_{j})}$ to obtain $\bar{x}_{i}^{(\alpha_j)}$ as either $\floor{{x}_{i}^{(\alpha_{j})} k}$ or $\ceil{{x}_{i}^{(\alpha_{j})} k}$.
Next, we apply the randomized response mechanism from Algorithm~\ref{alg:localrand} to each $\bar{x}_{i}^{(\alpha_{j})}$, which sets each output $y_{i}^{(\alpha_{j})}$ independently to be equal to $\bar{x}_{i}^{(\alpha_{j})}$ with probability $1 - \gamma$, or a random value in $\{ 0, 1, \dots, k \}$ with probability $\gamma$.
Each $y_{i}^{(\alpha_{j})}$ will contribute to a histogram of the form $(y_{1}^{(\alpha_{j})}, \dots, y_{n}^{(\alpha_{j})})$ as in Section~\ref{sec:basic}.

The Analyzer $\mathcal{A}_{d, k, t}$, shown in Algorithm~\ref{alg:analyzer}, aggregates the histograms to approximate $\sum_{i} \vec{x}_{i}$ by post-processing the vectors coordinate-wise.
More precisely, the analyzer sets each output $y_{i}^{(\alpha_{j})}$ to $y_{i}^{(l)}$, where the new label $l$ is from its corresponding input $x_{i}^{(l)}$ of the original $d$-dimensional vector $\vec{x}_{i}$.
For all inputs $x_{i}^{(l)}$ that were not sampled, we set $y_{i}^{(l)} = 0$.
Subsequently, the analyzer aggregates the sets of outputs from all users corresponding to each of those $l$ coordinates in turn, so that a $d$-dimensional vector is formed.
Finally, a standard debiasing step is applied to this vector to remove the scaling and rounding applied to each submission.
\texttt{DeBias} returns an unbiased estimator, $\vec{z}$, which calculates an estimate of the true sum of the vectors by subtracting the expected uniform noise from the randomized sum of the vectors.

\begin{algorithm}[t]
\DontPrintSemicolon
\SetArgSty{textnormal}
\KwPP{$k$, $t$, dimension $d$, and number of parties $n$}\\
\KwIn{$\vec{x}_{i} = (x_{i}^{(1)}, \dots, x_{i}^{(d)}) \in [0,1]^{d}$}
\KwOut{$\vec{y}_{i} = (y_{i}^{(\alpha_{1})}, \dots, y_{i}^{(\alpha_{t})}) \in \{0,1, \dots, k\}^{t}$}
\medskip
Sample $(\alpha_{1}, \dots, \alpha_{t}) \leftarrow$ {\tt Unif}$([d])$\\
Let $\bar{x}_{i}^{(\alpha_{j})} \leftarrow \floor[\dftbig]{x_{i}^{(\alpha_{j})} k} \ +$ {\tt Ber}$ (x_{i}^{(\alpha_{j})} k - \floor[\dftbig]{x_{i}^{(\alpha_{j})} k})$
\Comment*[r]{\parbox[t]{4in}{\raggedright $\bar{x}_{i}^{(\alpha_{j})}$: encoding of $x_{i}^{(\alpha_{j})}$ with precision $k$}}
\vspace*{-.3cm}
\Comment*[r]{\parbox[t]{5in}{\raggedright $y_{i}^{(\alpha_{j})}$: apply \textbf{Algorithm 1} to each $\bar{x}_{i}^{(\alpha_{j})}$}}
\KwRet{$\vec{y}_{i} = (y_{i}^{(\alpha_{1})}, \dots, y_{i}^{(\alpha_{t})})$}
\caption{Local Randomizer $\mathcal{R}_{d, k, n, t}$}
\label{alg:fixedpoint}
\end{algorithm}

\begin{algorithm}[t]
\DontPrintSemicolon
\SetArgSty{textnormal}
\KwPP{$k$, $t$, and dimension $d$}\\
\KwIn{Multiset $\bigl\{ \vec{y}_{i} \bigr\}_{i \in [n]}$, with $(y_{i}^{(\alpha_{1})}, \dots, y_{i}^{(\alpha_{t})}) \in \{0,1, \dots, k\}^{t}$}
\KwOut{$\vec{z} = (z^{(1)}, \dots, z^{(d)}) \in [0,1]^{d}$}
\medskip
Let $y_{i}^{(l)} \leftarrow y_{i}^{(\alpha_{j})}$
\medskip
\Comment*[r]{\parbox[t]{4.3in}{\raggedright $y_{i}^{(\alpha_{j})}$: submission corresponding to $x_{i}^{(l)}$}}
Let $(\hat{z}^{(1)}, \dots, \hat{z}^{(d)}) \leftarrow (\frac{1}{k} \sum_{i} y_{i}^{(1)}, \dots, \frac{1}{k} \sum_{i} y_{i}^{(d)})$\\
Let $(z^{(1)}, \dots, z^{(d)}) \leftarrow (${\tt DeBias}$ (\hat{z}^{(1)}), \dots, ${\tt DeBias}$ (\hat{z}^{(d)}))$
\Comment*[r]{\parbox[t]{4in}{\raggedright {\tt DeBias}$(\hat{z}^{(l)}) = ( \hat{z}^{(l)} - \frac{\gamma}{2} \cdot | y_{i}^{(l)}|) / (1 - \gamma)$}}
\KwRet{$\vec{z} = (z^{(1)}, \dots, z^{(d)})$}
\caption{Analyzer $\mathcal{A}_{d, k, t}$}
\label{alg:analyzer}
\end{algorithm}

\subsection{Privacy Analysis of Algorithms} \label{sec:privanalysis}

In this section, we will find an appropriate $\gamma$ that ensures that the mechanism described in Algorithms~\ref{alg:fixedpoint} and \ref{alg:analyzer} satisfies $(\varepsilon, \delta)$-DP for vector-valued messages in the Single-Message Shuffle Model.
To achieve this, we prove the following theorem, where we initially assume $\varepsilon < 1$ to simplify our computations.
At the end of this section, we discuss how to cover the additional case $1 \leq \varepsilon < 6$ to suit our experimental study.

\begin{thm} \label{thm:main}
The shuffled mechanism $\mathcal{M} = \mathcal{S} \circ \mathcal{R}_{d, k, n, t}$ is $(\varepsilon, \delta)$-DP for any $d, k, n \in \mathbb{N}$, $\{ t \in \mathbb{N} \ | \ t \in [d] \}$, $\varepsilon < 6$ and $\delta \in \left( 0,1 \right]$ such that:
\[
\gamma = 
\begin{cases}
\frac{56dk \log(1/\delta) \log(2t/\delta)}{(n - 1) \varepsilon^{2}}, & \text{when}\ \varepsilon < 1 \\
\frac{2016dk \log(1/\delta) \log(2t/\delta)}{(n - 1) \varepsilon^{2}}, & \text{when}\ 1 \leq \varepsilon < 6. \\
\end{cases}
\]
\end{thm}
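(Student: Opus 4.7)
The plan is to reduce to the scalar shuffle result of Balle \emph{et al.} applied per coordinate, and then use advanced composition across the $t$ coordinates that a single user's change can affect. First I would fix neighbouring datasets $\vec{D} \simeq \vec{D}'$ differing in user $n$'s vector and observe that the $t$ sampled indices $(\alpha_1, \ldots, \alpha_t)$ are functions of user $n$'s internal randomness alone, hence identical in both worlds; only the $t$ reported randomized responses differ. By the post-processing property and the structure of $\mathcal{A}_{d,k,t}$, the shuffled output is equivalent to $d$ per-coordinate histograms over the quantized domain $\{0, 1, \ldots, k\}$, and coordinates outside $\{\alpha_1, \ldots, \alpha_t\}$ are unaffected by the change, so the privacy loss collapses to that of the $t$ affected histograms.

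Second, I would invoke the scalar shuffle privacy bound underlying Algorithm~\ref{alg:localrand}: the shuffle on $m$ users over a domain of size $k+1$ is $(\varepsilon_0, \delta_0)$-DP whenever $\gamma \geq C(k+1) \log(1/\delta_0) / ((m-1)\varepsilon_0^2)$ for an absolute constant $C$ traceable to Balle \emph{et al.} Applied to coordinate $\ell$, the relevant $m$ is the number $n_\ell$ of users who sampled $\ell$; conditional on user $n$ sampling $\ell$, $n_\ell - 1$ is Binomial$(n-1, t/d)$ with mean $(n-1)t/d$. A Chernoff bound gives $n_\ell - 1 \geq (n-1)t/(2d)$ except on a failure event absorbed into a small $\delta$ contribution, so each affected coordinate is $(\varepsilon_0, \delta_0)$-DP provided $\gamma \gtrsim kd \log(1/\delta_0) / ((n-1)t\, \varepsilon_0^2)$.

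Third, I would compose over the $t$ affected coordinates via the advanced composition theorem, taking $\delta_0 = \delta/(2t)$ so that the union-bound contribution is at most $\delta/2$, and picking $\varepsilon_0$ to satisfy $\varepsilon_0\sqrt{2t \log(2/\delta)} + t\varepsilon_0(e^{\varepsilon_0} - 1) \leq \varepsilon$. For $\varepsilon < 1$, setting $\varepsilon_0 \asymp \varepsilon / \sqrt{t \log(1/\delta)}$ makes the Gaussian-like term dominate and the quadratic correction negligible; substituting back into the per-coordinate requirement cancels a factor of $t$ and yields $\gamma = O\bigl(dk \log(1/\delta)\log(2t/\delta)/((n-1)\varepsilon^2)\bigr)$, matching the constant $56$ once the absolute constants from the scalar bound and from advanced composition are bookkept. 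For $1 \leq \varepsilon < 6$ the term $t\varepsilon_0(e^{\varepsilon_0}-1)$ is no longer negligible, and I would re-solve the inequality with a safety factor on $\varepsilon_0$ that inflates $\varepsilon_0^{-2}$ by exactly $36$, giving the stated constant $2016 = 36 \cdot 56$.

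The main obstacle will be bookkeeping the constants honestly through the two layers: the scalar shuffle bound (whose proof itself uses a Chernoff-style concentration of the privacy blanket $Y_1$) and the advanced composition theorem. The related delicate points are handling the randomness in $n_\ell$ without losing more than a constant factor and, critically, the transition from $\varepsilon < 1$ to $\varepsilon < 6$: this does not follow for free from the small-$\varepsilon$ analysis but requires re-opening the composition inequality, since the quadratic correction cannot be absorbed into the leading term at the same rate. Conceptually, no step is subtle; the art lies in choosing $\varepsilon_0$ and $\delta_0$ so that every inequality in the chain tightens to the stated numerical constants.
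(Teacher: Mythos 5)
Your proposal takes essentially the same route as the paper: project the analyzer's vector view onto the $t$ sampled-coordinate scalar views, bound each one by the scalar privacy-blanket argument of Balle \emph{et al.}~\cite{balleprivacyblanket} with the per-coordinate user count concentrated around $(n-1)t/d$ by a Chernoff bound, and recombine by advanced composition with $\delta' \approx \delta/t$ and $\varepsilon' \approx \varepsilon/\sqrt{t\log(1/\delta)}$, handling $1 \leq \varepsilon < 6$ by shrinking $\varepsilon'$ by a factor of $6$ so the constant inflates by $36$ (giving $2016 = 36 \cdot 56$) --- exactly the structure of the paper's proof of Theorem~\ref{thm:main} together with Lemma~\ref{lem:scalar} in Appendix~\ref{app:proof}. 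The only caveat is that your constants are asserted rather than derived: black-boxing the scalar bound and using your (logically safer) lower bound $s \gtrsim (n-1)t/(2d)$ on the per-coordinate sample count would yield a leading constant of $224$ rather than $56$, whereas the paper arrives at $56$ by substituting $s \leq 2(n-1)t/d$; this is a constant-factor bookkeeping discrepancy inherited from the paper's own treatment, not a structural flaw in your argument.
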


\begin{proof}
Let $\vec{D} = (\vec{x}_{1}, \dots, \vec{x}_{n})$ and $\vec{D}' = (\vec{x}_{1}, \dots, \vec{x}'_{n})$ be the two neighbouring datasets differing only in the input of the $n$\textsuperscript{th} user, as used in Section~\ref{sec:basic}.
Here each vector-valued message $\vec{x}_{i}$ is of the form $(x_{i}^{(1)}, \dots, x_{i}^{(d)})$.
Recall from Section~\ref{sec:basic} that we assume that the analyzer can see the users in $B$ (i.e., the subset of users that returned a uniformly random message), as well as the inputs from the first $n - 1$ users.

We now introduce the \emph{vector view} $\text{VView}_{\mathcal{M}}(\vec{D})$ as the collection of information that the analyzer is able to see after the mechanism $\mathcal{M}$ is applied to all vector-valued messages in the dataset $\vec{D}$.
$\text{VView}_{\mathcal{M}}(\vec{D})$ is defined as the tuple 
$(\vec{Y}, \vec{D}_{\cap}, \vec{b})$, where $\vec{Y}$ is the multiset containing the outputs $\{ \vec{y}_{1}, \dots, \vec{y}_{n} \}$ of the mechanism $\mathcal{M}(\vec{D})$, $\vec{D}_{\cap}$ is the vector containing the inputs $(\vec{x}_{1}, \dots, \vec{x}_{n - 1})$ from the first $n - 1$ users, and $\vec{b}$ contains binary vectors $(\vec{b}_{1}, \dots, \vec{b}_{n})$ which indicate for which coordinates each user reports truthful information.
This vector view can be projected to $t$ overlapping \emph{scalar views} by applying Algorithm~\ref{alg:fixedpoint} only to the $j^{\text{th}}$ uniformly sampled coordinate $\alpha_{j} \in [d]$ from each user, where $j \in [t]$.
The $j^{\text{th}}$ scalar view $\text{View}_{\mathcal{M}}^{(\alpha_{j})}(\vec{D})$ of $\text{VView}_{\mathcal{M}}(\vec{D})$ is defined as the tuple $(\vec{Y}^{(\alpha_{j})}, \vec{D}_{\cap}^{(\alpha_{j})},  \vec{b}^{(\alpha_{j})})$, where:
\begin{align*}
\vec{Y}^{(\alpha_{j})} & = \mathcal{M} (\vec{D}^{(\alpha_{j})}) = \{ y_{1}^{(\alpha_{j})}, \dots, y_{n}^{(\alpha_{j})} \}, \\ \vec{D}_{\cap}^{(\alpha_{j})} & = (x_{1}^{(\alpha_{j})}, \dots, x_{n - 1}^{(\alpha_{j})}) \\
\text{and} \quad \vec{b}^{(\alpha_{j})} & = (b_{1}^{(\alpha_{j})}, \dots, b_{n}^{(\alpha_{j})})
\end{align*}
are the analogous definitions of $\vec{Y}$, $\vec{D}_{\cap}$ and $\vec{b}$, but containing only the information referring to the $j^{\text{th}}$ uniformly sampled coordinate of each vector-valued message.

The following \emph{advanced composition} results will be used in our setting to get a tight upper bound:

\begin{thm}[Dwork \emph{et al.}~\cite{dwork}] \label{thm:advcomp}
For all $\varepsilon', \delta', \delta \geq 0$, the class of $(\varepsilon', \delta')$-differentially private mechanisms satisfies $(\varepsilon, r \delta' + \delta)$-differential privacy under $r$-fold adaptive composition for:
\[
\varepsilon = \sqrt{2r \log(1/\delta)} \varepsilon' + r \varepsilon' \dftbig( e^{\varepsilon'} - 1 \dftbig).
\]
\end{thm}

\begin{cor} \label{cor:advcomp}
Given target privacy parameters $0 < \varepsilon < 1$ and $\delta > 0$, to ensure $(\varepsilon, r \delta' + \delta)$ cumulative privacy loss over $r$ mechanisms, it suffices that each mechanism is $(\varepsilon', \delta')$-DP, where:
\[
\varepsilon' = \frac{\varepsilon}{2 \sqrt{2r \log(1/\delta)}}.
\]
\end{cor}

\noindent To show that $\text{VView}_{\mathcal{M}}(\vec{D})$ satisfies $(\varepsilon, \delta)$-DP it suffices to prove that:
\[
\Pr_{\widetilde{\mathsf{V}} \sim \text{VView}_{\mathcal{M}}(\vec{D})} \!\left[ \frac{ \Pr [ \text{VView}_{\mathcal{M}}(\vec{D}) = \widetilde{\mathsf{V}} ] }
{ \Pr [ \text{VView}_{\mathcal{M}}(\vec{D}') = \widetilde{\mathsf{V}} ] } \geq e^{\varepsilon} \right] \leq \delta. \label{eq:vector} \tag{$1$}
\]

\noindent By considering this vector view as a union of overlapping scalar views, and letting $r = t$ in Corollary~\ref{cor:advcomp}, it is sufficient to derive \eqref{eq:vector} from:
\[
\Pr_{\mathsf{V}_{\alpha_{j}} \sim \text{View}_{\mathcal{M}}^{(\alpha_{j})}(\vec{D})} \!\left[ \frac{ \Pr [ \text{View}_{\mathcal{M}}^{(\alpha_{j})}(\vec{D}) = \mathsf{V}_{\alpha_{j}} ] }
{ \Pr [ \text{View}_{\mathcal{M}}^{(\alpha_{j})}(\vec{D}') = \mathsf{V}_{\alpha_{j}} ] } \geq e^{\varepsilon'} \right] \leq \delta', \label{eq:scalar} \tag{$2$}
\]

\noindent where $\widetilde{\mathsf{V}} = \bigcup_{\alpha_{j}} \mathsf{V}_{\alpha_{j}}$, $\varepsilon' = \frac{\varepsilon}{2 \sqrt{2t \log(1/\delta)}}$ and $\delta' = \frac{\delta}{t}$.

\begin{lem} \label{lem:implies}
Condition \eqref{eq:scalar} implies condition \eqref{eq:vector}.
\end{lem}

\begin{proof}
We can express $\text{VView}_{\mathcal{M}}(\vec{D})$ as the {composition} of the $t$ scalar views $\text{View}_{\mathcal{M}}^{(\alpha_{1})}, \dots, \text{View}_{\mathcal{M}}^{(\alpha_{t})}$, as:
\begin{align*}
\Pr&[ \text{VView}_{\mathcal{M}}(\vec{D}) = \widetilde{\mathsf{V}} ] \\
&= \Pr [ \text{View}_{\mathcal{M}}^{(\alpha_{1})}(\vec{D}) = \mathsf{V}_{\alpha_{1}} \wedge \cdots \wedge \text{View}_{\mathcal{M}}^{(\alpha_{t})}(\vec{D}) = \mathsf{V}_{\alpha_{t}} ] \\
&= \Pr [ \text{View}_{\mathcal{M}}^{(\alpha_{1})}(\vec{D}) = \mathsf{V}_{\alpha_{1}} ] \boldsymbol{\cdot} \cdots \boldsymbol{\cdot} \Pr [ \text{View}_{\mathcal{M}}^{(\alpha_{t})}(\vec{D}) = \mathsf{V}_{\alpha_{t}}].
\end{align*}

Our desired result is immediate by applying Corollary~\ref{cor:advcomp}, which states that the use of $t$ overlapping $(\varepsilon', \delta')$-DP mechanisms, when taken together, is $(\varepsilon, \delta)$-DP.
This applies in our setting, since we have assumed that $\text{VView}_{\mathcal{M}}(\vec{D})$ satisfies the requirements of $(\varepsilon, \delta)$-DP, and that each of the $t$ overlapping scalar views is formed identically but for a different uniformly sampled coordinate of the vector-valued messages.
\end{proof}

To complete the proof of Theorem~\ref{thm:main} for $\varepsilon < 1$, it remains to show that for a uniformly sampled coordinate $\alpha_{j} \in [d]$, $\text{View}_{\mathcal{M}}^{(\alpha_{j})}(\vec{D})$ satisfies $(\varepsilon', \delta')$-DP.

\begin{restatable}{lem}{primelemma} \label{lem:scalar}
Condition \eqref{eq:scalar} holds.
\end{restatable}

\begin{proof}
See Appendix~\ref{app:proof}.
\end{proof}

We now show that the above proof can be adjusted to cover the additional case $1 \leq \varepsilon < 6$.
This will be sufficient to complete the proof of our main Theorem~\ref{thm:main}.

First, we scale the setting of $\varepsilon'$ by a multiple of $6$ in Corollary~\ref{cor:advcomp} so that the advanced composition property holds for all $1 \leq \varepsilon < 6$.
We now insert $\varepsilon' = \frac{\varepsilon}{12 \sqrt{2r \log(1/\delta)}}$ into the proof of Theorem~\ref{thm:main}, resulting in a change of constant from $56$ to $2016$.
\end{proof}

\subsection{Accuracy Bounds for Shuffled Vector Sum} \label{sec:tightupperbound}

We now formulate an upper bound for the MSE of our protocol, and then identify the value(s) of $t$ that minimize this upper bound.

First, note that encoding the coordinate $x_{i}^{(\alpha_{j})}$ as $\bar{x}_{i}^{(\alpha_{j})} = \floor[\dftbig]{x_{i}^{(\alpha_{j})} k} \ +$ {\tt Ber}$ ( x_{i}^{(\alpha_{j})} k - \floor[\dftbig]{x_{i}^{(\alpha_{j})} k})$ in Algorithm~\ref{alg:fixedpoint} ensures that $\mathbb{E}[\bar{x}_{i}^{(\alpha_{j})}/k] = \mathbb{E} [x_{i}^{(\alpha_{j})}]$. This means that our protocol is unbiased.
For any unbiased random variable $X$ with $a < X < b$ then $\text{Var}[X] \leq (b-a)^2/4$, and so the MSE per coordinate due to the fixed-point approximation of the true vector in $\mathcal{R}_{d, k, n, t}$  is at most $\frac{1}{4k^{2}}$. 
Meanwhile, the MSE when $\mathcal{R}_{d, k, n, t}$ submits a random vector is at most $\frac{1}{2}$ per coordinate.

We now use the unbiasedness of our protocol to obtain a result for estimating the squared error between the estimated average vector and the true average vector.
When calculating the MSE, each coordinate location is used with expectation $n/d$.
Therefore, we define the \emph{normalized} MSE, or $\widehat{\textnormal{MSE}}$, as the normalization of the MSE by a factor of $(n/d)^2$.

\begin{thm} \label{thm:mse}
For any $d, n \in \mathbb{N}$, $\{ t \in \mathbb{N} \ | \ t \in [d] \}$, $\varepsilon < 6$ and $\delta \in \!\left( 0,1 \right]$, there exists a parameter $k$ such that $\mathcal{P}_{d, k, n, t}$ is $(\varepsilon, \delta)$-DP and
\[
\widehat{\textnormal{MSE}}(\mathcal{P}_{d, k, n, t}) =
\begin{cases}
\frac{2t d^{8/3} (14 \log(1/\delta) \log(2t/\delta))^{2/3}}{(1-\gamma)^{2} n^{5/3} \varepsilon^{4/3}}, \\ \quad \text{when}\ \varepsilon < 1 \\
\frac{8t d^{8/3} (63 \log(1/\delta) \log(2t/\delta))^{2/3}}{(1-\gamma)^{2} n^{5/3} \varepsilon^{4/3}}, \\ \quad \text{when}\ 1 \leq \varepsilon < 6, \\
\end{cases}
\]
where $\widehat{\textnormal{MSE}}$ denotes the squared error between the estimated average vector and the true average vector.
\end{thm}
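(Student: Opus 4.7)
The plan is to upper-bound the MSE coordinate-by-coordinate using the variance decomposition provided in the paragraph preceding the theorem, substitute the value of $\gamma$ supplied by Theorem~\ref{thm:main}, and finally optimise the quantisation level $k$ to minimise the resulting expression.

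Step 1 (per-coordinate variance). Fix a coordinate $l\in[d]$. Each user $i$ contributes to the estimate at coordinate $l$ only when $l$ is among the $t$ indices sampled uniformly from $[d]$ without replacement, which happens with probability $t/d$. Conditional on contributing, the randomizer outputs $\bar{x}_{i}^{(l)}$ (whose $1/k$-scaled value has variance at most $1/(4k^{2})$) with probability $1-\gamma$, and a uniform element of $\{0,\ldots,k\}$ (scaled variance at most $1/2$) with probability $\gamma$. Independence across users, together with the $(1-\gamma)^{-2}$ inflation caused by the {\tt DeBias} step, yields
\[
\textnormal{Var}\!\left[z^{(l)}\right] \;\le\; \frac{nt}{d(1-\gamma)^{2}}\!\left(\frac{1}{4k^{2}}+\frac{\gamma}{2}\right).
\]

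Step 2 (aggregate and normalise). Since Algorithm~\ref{alg:fixedpoint} is unbiased, the per-coordinate MSE equals the per-coordinate variance. Summing over the $d$ coordinates of $\vec{z}$ and then normalising by $(n/d)^{2}$, as stipulated in the paragraph preceding the theorem, gives
\[
\widehat{\textnormal{MSE}}(\mathcal{P}_{d,k,n,t}) \;\le\; \frac{t\,d^{2}}{n(1-\gamma)^{2}}\!\left(\frac{1}{4k^{2}}+\frac{\gamma}{2}\right).
\]

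Step 3 (substitute $\gamma$ and optimise $k$). Writing $\gamma=Ak$ with $A=Cd\log(1/\delta)\log(2t/\delta)/((n-1)\varepsilon^{2})$ and $C\in\{56,2016\}$, the bracket becomes $1/(4k^{2})+Ak/2$, whose minimum over $k>0$ occurs at $k^{3}=1/A$ and equals $\tfrac{3}{4}A^{2/3}$. Re-substituting $A$ and multiplying by the $td^{2}/(n(1-\gamma)^{2})$ prefactor produces the $d^{8/3}n^{-5/3}\varepsilon^{-4/3}(\log(1/\delta)\log(2t/\delta))^{2/3}$ scaling stated in the theorem. Writing $56=4\cdot 14$ and $2016=32\cdot 63$ and absorbing the powers $4^{2/3}$ and $32^{2/3}$ under the cube root recovers the integer constants $14$ and $63$ inside the exponent $2/3$, together with the outer factors $2$ and $8$.

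The main obstacle is Step~3: the asymptotic form drops out easily from the variance bound, but extracting the precise constants through the cube root, while simultaneously checking that at the optimal $k$ the value $\gamma=Ak=A^{2/3}$ remains bounded away from $1$ (so that the $(1-\gamma)^{-2}$ prefactor stays benign and the integer rounding of $k$ does not affect the leading behaviour), requires careful bookkeeping. In particular, one must argue that the admissible regime $\varepsilon<6$ and $n\gg d\log(1/\delta)\log(2t/\delta)$ suffices to keep $A$ small, thereby legitimising the unbounded optimisation over $k$ that is carried out in Step~3.
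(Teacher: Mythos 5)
Your proposal is correct and follows essentially the same route as the paper: bound the per-report variance by $\frac{1}{4k^{2}}+\frac{\gamma}{2}$, inflate by $(1-\gamma)^{-2}$ from \texttt{DeBias}, normalise by $(n/d)^{2}$, substitute the $\gamma$ of Theorem~\ref{thm:main}, and optimise $k$ (your per-coordinate accounting with sampling probability $t/d$ summed over $d$ coordinates is just equivalent bookkeeping to the paper's sum over the $tn$ sampled reports). Your explicit minimisation giving $\tfrac{3}{4}A^{2/3}$ and the factorisations $56=4\cdot14$, $2016=32\cdot63$ reproduce (indeed slightly sharpen) the stated constants, so no gap.
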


\begin{proof}
We consider the $\sum_{l = 1}^{d} {\tt DeBias}(\hat{z}^{(l)})$ of $\mathcal{P}_{d, k, n, t}$ compared to the corresponding input $\sum_{j = 1}^{t} \sum_{i = 1}^{n} x_{i}^{(\alpha_{j})}$ over the dataset $\vec{D}$.
We use the bounds on the variance of the randomized response mechanism from Theorem~\ref{thm:mse} to give us an upper bound for this comparison.
\begin{align*}
\textnormal{MSE}&(\mathcal{P}_{d, k, n, t}) = \sup_{\vec{D}} \E \!\left[ \!\left( \sum_{l = 1}^{d} {\tt DeBias} ( \hat{z}^{(l)} ) - \sum_{j = 1}^{t} \sum_{i = 1}^{n} x_{i}^{(\alpha_{j})} \right)^{2} \ \right] \\
&= \sup_{\vec{D}} \E \!\left[ \!\left( \sum_{j = 1}^{t} \sum_{i = 1}^{n} \!\left( {\tt DeBias} ( y_{i}^{(\alpha_{j})}/k ) - x_{i}^{(\alpha_{j})} \right) \right)^{2} \ \right] \\
&= \sup_{\vec{D}} \sum_{j = 1}^{t} \sum_{i = 1}^{n} \E \!\left[ \!\left( {\tt DeBias} ( y_{i}^{(\alpha_{j})}/k ) - x_{i}^{(\alpha_{j})} \right)^{2} \ \right] \\
&= \sup_{\vec{D}} \sum_{j = 1}^{t} \sum_{i = 1}^{n} \text{Var} \!\left[ {\tt DeBias} ( y_{i}^{(\alpha_{j})}/k ) \right] \\
&= \frac{tn}{(1 - \gamma)^{2}} \sup_{x_{1}^{(\alpha_{1})}} \text{Var} [ y_{1}^{(\alpha_{1})}/k ]
\le \frac{tn}{(1 - \gamma)^{2}} \!\left( \frac{1 - \gamma}{4k^{2}} + \frac{\gamma}{2} \right) \\
&\le \frac{tn}{(1 - \gamma)^{2}} \!\left( \frac{1}{4k^{2}} + \frac{A_{\varepsilon} dk  \log(1/\delta)\log(2t/\delta)}{(n-1) \varepsilon^2} \right),
\end{align*}

\noindent where $A_{\varepsilon} = 28$ when $\varepsilon < 1$, and $A_{\varepsilon} = 1008$ when $1 \leq \varepsilon < 6$.
In other words, $A_{\varepsilon}$ is equal to half the constant term in the expression of $\gamma$ stated in Theorem~\ref{thm:main}.
The choice
$k = \frac{(n-1) \varepsilon^{2}}{4A_{\varepsilon} d \log(1/\delta) \log(2t/\delta)}$ minimizes the bracketed sum above and the bounds in the statement of the theorem follow.
\end{proof}

To obtain the error between the estimated average vector and the true average vector, we simply take the square root of the result obtained in Theorem~\ref{thm:mse}.

\begin{cor} \label{cor:query}
For every statistical query $q: \mathcal{X} \mapsto [0,1]^{d}$, $d, n \in \mathbb{N}$, $\{ t \in \mathbb{N} \ | \ t \in [d] \}$, $\varepsilon < 6$ and $\delta \in \!\left( 0,1 \right]$, there is an $(\varepsilon, \delta)$-DP $n$-party unbiased protocol for estimating $\frac{d}{n} \sum_{i} q(\vec{x}_{i})$ in the Single-Message Shuffle Model with standard deviation
\[
\hat{\sigma}(\mathcal{P}_{d, k, n, t}) =
\begin{cases}
\frac{(2t)^{1/2} d^{4/3} (14 \log(1/\delta) \log(2t/\delta))^{1/3}}{(1-\gamma) n^{5/6} \varepsilon^{2/3}}, \\ \quad \text{when}\ \varepsilon < 1 \\
\frac{(8t)^{1/2} d^{4/3} (63 \log(1/\delta) \log(2t/\delta))^{1/3}}{(1-\gamma) n^{5/6} \varepsilon^{2/3}}, \\ \quad \text{when}\ 1 \leq \varepsilon < 6, \\
\end{cases}
\]
where $\hat{\sigma}$ denotes the error between the estimated average vector and the true average vector.
\end{cor}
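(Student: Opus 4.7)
The plan is to derive Corollary~\ref{cor:query} as an essentially immediate consequence of Theorem~\ref{thm:mse}, via the identity $\hat{\sigma} = \sqrt{\widehat{\textnormal{MSE}}}$. The justification for this identity is the unbiasedness of $\mathcal{P}_{d,k,n,t}$, already established at the start of the proof of Theorem~\ref{thm:mse}: because the rounding step satisfies $\E[\bar{x}_i^{(\alpha_j)}/k] = \E[x_i^{(\alpha_j)}]$ and {\tt DeBias} compensates for the expected uniform-noise contribution, the final estimator has zero bias, so its mean squared error equals its variance.

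First, I would instantiate the vector summation protocol of Section~\ref{sec:protocol} on the inputs $q(\vec{x}_1), \dots, q(\vec{x}_n) \in [0,1]^d$, which is legitimate because $q$ maps into $[0,1]^d$ by assumption. Theorem~\ref{thm:main} supplies $(\varepsilon,\delta)$-DP for this instantiation with the $\gamma$ prescribed there, and scaling the analyzer's output by $d/n$ is pure post-processing, so it preserves privacy and unbiasedness. By construction of $\widehat{\textnormal{MSE}}$ as $\textnormal{MSE}/(n/d)^2$, this rescaled quantity is precisely the expected squared error of the (unbiased) estimator for $\tfrac{d}{n}\sum_i q(\vec{x}_i)$, and hence equals its variance.

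Finally, I would take square roots in the two-case expression from Theorem~\ref{thm:mse}. The exponents distribute routinely: $2t \mapsto (2t)^{1/2}$, $d^{8/3}\mapsto d^{4/3}$, the $\log$-term exponent $2/3 \mapsto 1/3$, and in the denominator $(1-\gamma)^{2} n^{5/3}\varepsilon^{4/3} \mapsto (1-\gamma)\,n^{5/6}\varepsilon^{2/3}$, matching both displayed cases of $\hat{\sigma}$. There is no genuine obstacle here: once Theorem~\ref{thm:mse} and the unbiasedness of the protocol are in hand, the result is bookkeeping. The only mild point worth confirming explicitly is that the normalization and $d/n$-scaling conventions align so that $\sqrt{\widehat{\textnormal{MSE}}}$ is indeed the standard deviation of the estimator for the average quantity $\tfrac{d}{n}\sum_i q(\vec{x}_i)$ named in the corollary, which follows directly from the variance-scaling identity $\textnormal{Var}[cX] = c^{2}\textnormal{Var}[X]$ with $c = d/n$.
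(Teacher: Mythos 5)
Your proposal is correct and matches the paper's own (very brief) argument: the paper simply notes that the corollary follows by taking the square root of the $\widehat{\textnormal{MSE}}$ bound in Theorem~\ref{thm:mse}, using the unbiasedness of $\mathcal{P}_{d,k,n,t}$ so that MSE equals variance. Your additional remarks about instantiating the protocol on $q(\vec{x}_1),\dots,q(\vec{x}_n)$, the post-processing/scaling step, and the exponent bookkeeping are consistent with, and slightly more explicit than, what the paper does.
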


To summarize, we have produced an unbiased protocol for the computation of the sum of $n$ real vectors in the Single-Message Shuffle Model with normalized MSE $O_{\varepsilon, \delta} (d^{8/3} t n^{-5/3})$, using advanced composition results from Dwork \emph{et al.}~\cite{dwork}.
Minimizing this bound as a function of $t$ leads us to choose $t=1$, but any choice of $t$ that is small and not dependent on $d$ produces a bound of the same order.
In our experimental study, we determine that the best choice of $t$ in practice is indeed $t=1$.

\subsection{Improved bounds for t=1} \label{sec:betterbounds}
We observe that in the optimal case in which $t=1$, we can tighten the bounds further, as we do not need to invoke the advanced composition results when each user samples only a single coordinate.
This changes the value of $\gamma$ by a factor of $O(\log(1/\delta))$, which propagates through to the expression for the MSE.
That is, we can more simply set $\varepsilon' = \varepsilon$ and $\delta' = \delta$ in the proof of Theorem~\ref{thm:main}. When $\varepsilon < 1$, the computation is straightforward, with $c \geq \frac{14}{\varepsilon'^2} \log(2t/\delta)$ being chosen as before. However, when $1 \leq \varepsilon < 6$, a tighter $c \geq \frac{80}{\varepsilon'^2} \log(2t/\delta)$ must be selected, as the condition $\varepsilon' < 1$ no longer holds.

Using $\varepsilon' < 6$, we have:
\[
(1 - \exp \hspace{.5mm} (-\varepsilon'/2)) \ge \left( 1 - \exp \left (-\frac{2}{3\sqrt{15}} \right) \right) \varepsilon' \ge \frac{\varepsilon'}{2\sqrt{10}}.
\]

\noindent Thus, we have:
\begin{align*}
\Pr \!\left[ \frac{\mathsf{N}_{\theta}}{\mathsf{N}_{\phi}}  \geq e^{\varepsilon'} \right]
&\le \exp \dftbig( -\frac{c}{3} (\varepsilon'/2)^2 \dftbig) +
\exp \dftbig( -\frac{c}{2} \dftbig( \frac{\varepsilon'}{2\sqrt{10}} \dftbig) ^2 \dftbig)\\
&\le 2\exp\left(-\frac{80}{2\varepsilon'^2}\frac{\varepsilon'^2}{40} \log(2t/\delta)\right) \leq \delta/t,
\end{align*}

\noindent which yields:
\[
\gamma =
\begin{cases}
\max \!\dftbig\{ \frac{14dk \log(2/\delta)}{(n - 1) \varepsilon^{2}}, \frac{27dk}{(n - 1) \varepsilon} \dftbig\}, & \text{when}\ \varepsilon < 1 \\
\max \!\dftbig\{ \frac{80dk \log(2/\delta)}{(n - 1) \varepsilon^{2}}, \frac{36dk}{11(n - 1) \varepsilon} \dftbig\}, & \text{when}\ 1 \leq \varepsilon < 6. \\
\end{cases}
\]

Note that the above expression for $\gamma$ in the case $\varepsilon < 1$ coincides with the result obtained by Balle \emph{et al.} in the scalar case~\cite{balleprivacyblanket}. Putting this expression for $\gamma$ in the proof of Theorem~\ref{thm:mse}, with the choice
\[
k = 
\begin{cases}
\min \!\dftbig\{ \!\left( \frac{n \varepsilon^{2}}{28d \log(2/\delta)} \right)^{1/3}, \ \!\left( \frac{n \varepsilon}{54d} \right)^{1/3} \dftbig\}, & \text{when}\ \varepsilon < 1 \\
\min \!\dftbig\{ \!\left( \frac{n \varepsilon^{2}}{160d \log(2/\delta)} \right)^{1/3}, \ \!\left( \frac{11n \varepsilon}{72d} \right)^{1/3} \dftbig\}, & \text{when}\ 1 \leq \varepsilon < 6,
\\
\end{cases}
\]

\noindent causes the upper bound on the normalized MSE to reduce to:
\[
\widehat{\textnormal{MSE}} =
\begin{cases}
\max \!\dftbig\{ \frac{98^{1/3} d^{8/3} \log^{2/3}(2/\delta)}{(1-\gamma)^{2} n^{5/3} \varepsilon^{4/3}}, \frac{18 d^{8/3}}{(1-\gamma)^{2} n^{5/3} (4 \varepsilon)^{2/3}} \dftbig\}, & \text{when}\ \varepsilon < 1 \\
\max \!\dftbig\{ \frac{2 d^{8/3} (20\log(2/\delta))^{2/3}}{(1-\gamma)^{2} n^{5/3} \varepsilon^{4/3}}, \frac{2(9^{2/3}) d^{8/3}}{(1-\gamma)^{2} n^{5/3} (11 \varepsilon)^{2/3}} \dftbig\}, & \text{when}\ 1 \leq \varepsilon < 6. \\
\end{cases}
\]

By updating Corollary \ref{cor:query} in the same way, we can conclude that for the optimal choice $t = 1$, the normalized standard deviation of our unbiased protocol can be further tightened to:
\[
\hat{\sigma} =
\begin{cases}
\max \!\dftbig\{ \frac{98^{1/6} d^{4/3} \log^{1/3}(2/\delta)}{(1-\gamma) n^{5/6} \varepsilon^{2/3}}, \frac{18^{1/2} d^{4/3}}{(1-\gamma) n^{5/6} (4 \varepsilon)^{1/3}} \dftbig\}, & \text{when}\ \varepsilon < 1 \\
\max \!\dftbig\{ \frac{2^{1/2} d^{4/3} (20\log(2/\delta))^{1/3}}{(1-\gamma) n^{5/6} \varepsilon^{2/3}}, \frac{2^{1/2} 9^{1/3} d^{4/3}}{(1-\gamma) n^{5/6} (11 \varepsilon)^{1/3}} \dftbig\}, & \text{when}\ 1 \leq \varepsilon < 6. \\
\end{cases}
\]

\section{Experimental Evaluation} \label{sec:eeval}

In this section we present and compare the bounds generated by applying Algorithms~\ref{alg:fixedpoint} and \ref{alg:analyzer} to an ECG Heartbeat Categorization Dataset in Python, available at \url{https://www.kaggle.com/shayanfazeli/heartbeat}.
We analyse the effect of changing one key parameter at a time, whilst the others remain the same.
Our default settings are vector dimension $d = 100$, rounding parameter $k = 3$, number of users $n = 50000$, number of coordinates to sample $t = 1$, and differential privacy parameters $\varepsilon = 0.95$ and $\delta = 0.5$.
The ranges of all parameters have been adjusted to best display the dependencies, whilst simultaneously ensuring that the parameter $\gamma$ of the randomized response mechanism is always within its permitted range of $[0,1]$.
The Python code is available at \url{https://github.com/mary-python/dft/blob/master/shuffle}.

We first confirm that the choice of $t = 1$ is optimal, as predicted by the results of Section~\ref{sec:betterbounds}.
Indeed, Fig.~\ref{fig:basictk} (a) shows that the total experimental $\widehat{\textnormal{MSE}}$ for the ECG Heartbeat Categorization Dataset is significantly smaller when $t = 1$, compared to any other small value of $t$.

\begin{figure*}[p]
\centering
\subfloat[Experimental error by number of coordinates $t$ retained]{\includegraphics[width=0.4\linewidth]{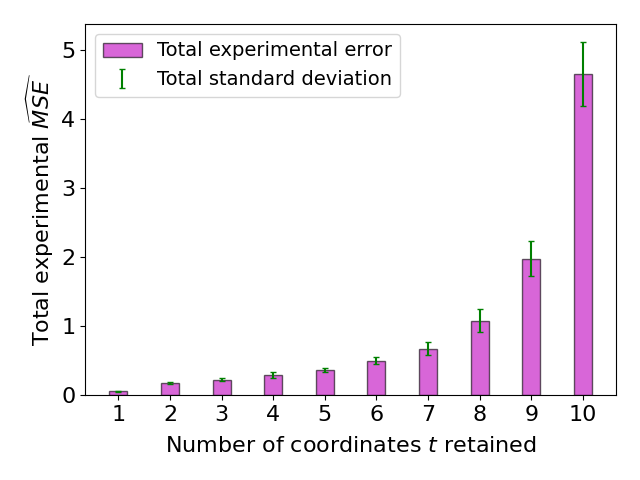}}%
\qquad\qquad
\subfloat[Experimental error by number of buckets $k$ used]{\includegraphics[width=0.4\linewidth]{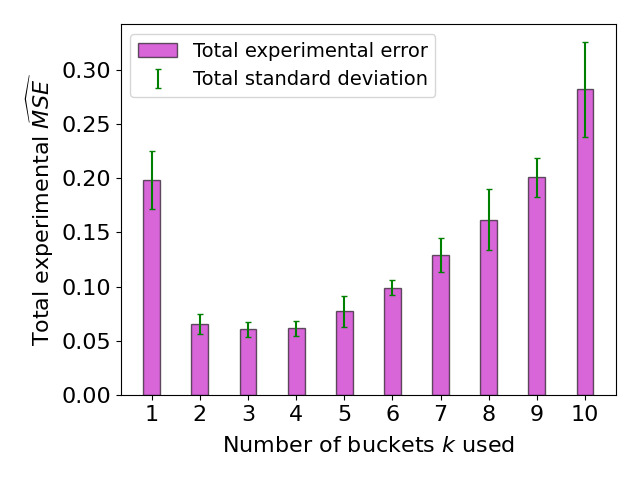}}\\
\caption{\small\label{fig:basictk} Bar charts confirming that the choices $t = 1$ (a) and $k = 3$ (b) minimize the total experimental $\widehat{\textnormal{MSE}}$ for the ECG Heartbeat Categorization Dataset.}
\end{figure*}

\begin{figure*}[p]
\centering
\subfloat[Experimental error by vector dimension $d$]{\includegraphics[width=0.4\linewidth]{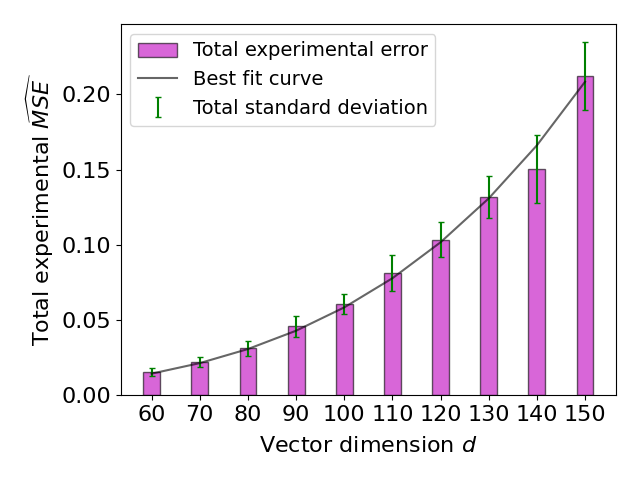}}%
\qquad\qquad
\subfloat[Experimental error by number of vectors $n$ used]{\includegraphics[width=0.4\linewidth]{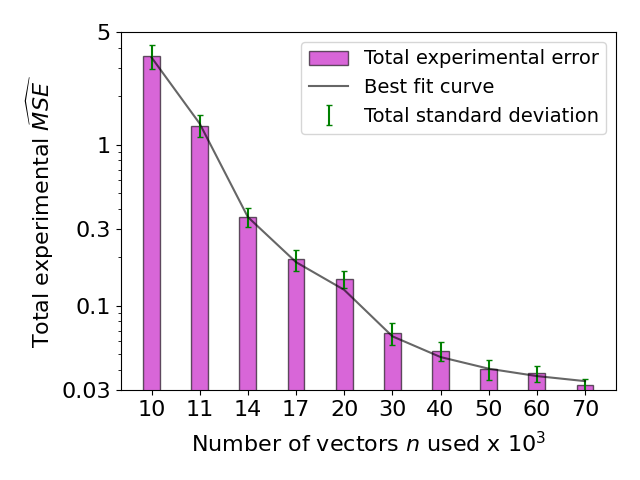}}\\
\caption{\small\label{fig:basicdn} Bar charts with best fit curves confirming the dependencies $d^{8/3}$ (a) and $n^{-5/3}$ (b) from Theorem~\ref{thm:mse}.}
\end{figure*}

\begin{figure*}[p]
\centering
\subfloat[Experimental error by value of $\varepsilon$ where $\varepsilon < 1$]{\includegraphics[width=0.4\linewidth]{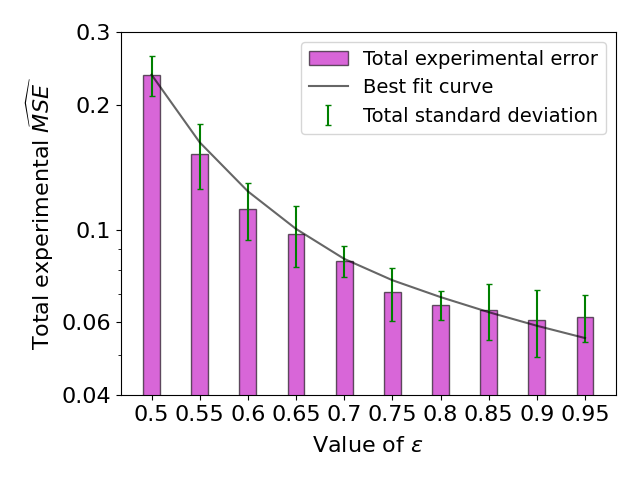}}%
\qquad\qquad
\subfloat[Experimental error by value of $\varepsilon$ where $1 \leq \varepsilon < 6$]{\includegraphics[width=0.4\linewidth]{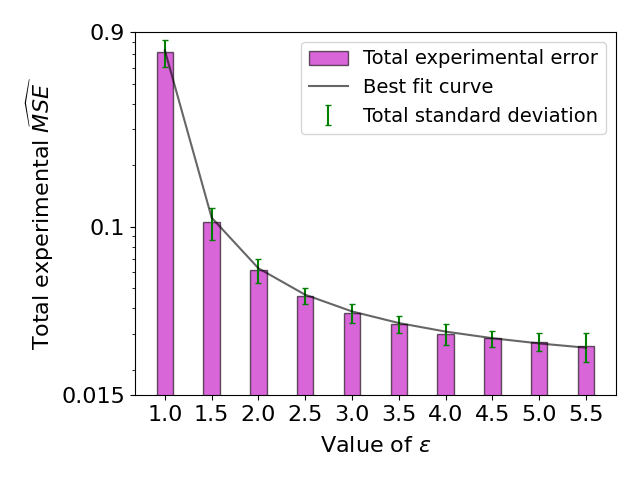}}\\
\caption{\small\label{fig:basiceps} Bar charts with best fit curves confirming the dependency $\varepsilon^{-4/3}$ from Theorem~\ref{thm:mse} in the two ranges $\varepsilon < 1$ (a) and $1 \leq \varepsilon < 6$ (b).}
\end{figure*}

Similarly, Fig.~\ref{fig:basictk} (b) suggests that the total experimental $\widehat{\textnormal{MSE}}$ is lowest when $k = 3$, which is sufficiently close to the choice of $k$ selected in the proof of Theorem~\ref{thm:mse}, with all other default parameter values substituted in.
Observe that the absolute value of the observed MSE is below 0.3 in this case, meaning that the vector is reconstructed to a high degree of accuracy, sufficient for many applications.

Next, we verify the bounds of $d^{8/3}$ and $n^{-5/3}$ from Theorem~\ref{thm:mse}.
Fig.~\ref{fig:basicdn} (a) is plotted with a best fit curve with equation a multiple of $d^{8/3}$, exactly as desired.
Unsurprisingly, the MSE increases as $d$ goes up according to this superlinear dependence.
Meanwhile, Fig.~\ref{fig:basicdn} (b) fits a curve dependent on $n^{-7/6}$, sufficiently close to the required result.
We see the benefit of increasing $n$: as $n$ increases by a factor of 10 across the plot, the error decreases by more than two orders of magnitude.
In Fig.~\ref{fig:basiceps}, we verify the dependency $\varepsilon^{-4/3}$ in the two ranges $\varepsilon < 1$ and $1 \leq \varepsilon < 6$.
The behavior for $\varepsilon < 1$ is quite smooth, but becomes more variable for larger $\varepsilon$ values.

In conclusion, these experiments confirm that picking $t = 1$ and $k = 3$ serves to minimize the error.
The lines of best fit confirm the dependencies on the other parameters from Section~\ref{sec:vectorsum} for $d$, $\varepsilon$ and $n$, by implementing and applying Algorithms~\ref{alg:fixedpoint} and \ref{alg:analyzer} to an ECG Heartbeat Categorization Dataset in Python.
The experiments demonstrate that the MSE observed in practice is sufficiently small to allow effective reconstruction of average vectors for a suitably large cohort of users.

\section{Conclusion} \label{sec:conc}

Our results extend a result from Balle \emph{et al.}~\cite{balleprivacyblanket} for scalar sums to provide a protocol $\mathcal{P}_{d, k, n, t}$ in the Single-Message Shuffle Model for the private summation of vector-valued messages $(\vec{x}_{1}, \dots, \vec{x}_{n}) \in ([0,1]^{d})^{n}$.
It is not surprising that the normalized MSE of the resulting estimator has a dependence on $n^{-5/3}$, as this was the case for scalars, but the addition of a new dimension $d$ introduces a new dependency for the bound, as well as the possibility of sampling $t$ coordinates from each $d$-dimensional vector.
For this extension, we formally defined the \emph{vector view} as the knowledge of the analyzer upon receiving the randomized vectors, and expressed it as a union of overlapping scalar views.
Through the use of advanced composition results from Dwork \emph{et al.}~\cite{dwork}, we showed that the estimator now has normalized MSE $O_{\varepsilon, \delta} (d^{8/3} t n^{-5/3})$ which can be further improved to $O_{\varepsilon, \delta} (d^{8/3} n^{-5/3})$ by setting $t = 1$.

Our contribution has provided a stepping stone between the summation of the scalar case discussed by Balle \emph{et al.}~\cite{balleprivacyblanket} and the linearization of more sophisticated structures such as matrices and higher-dimensional tensors, both of which are reliant on the functionality of the vector case.
As mentioned in Section~\ref{sec:litreview}, there is potential for further exploration in the Multi-Message Shuffle Model to gain additional privacy, echoing the follow-up paper of Balle \emph{et al.}~\cite{ballemulti}.

\appendix
\section{Proof of Lemma~\ref{lem:scalar}} \label{app:proof}
\primelemma*

\begin{proof}
The way in which we split the vector view (i.e., to consider a single uniformly sampled coordinate of each vector-valued message in turn), means that we can apply a proof that is analogous to the scalar-valued case~\cite{balleprivacyblanket}.
We work through the key steps needed.

Recall from Section~\ref{sec:basic} that the case where the $n^{\text{th}}$ user submits a uniformly random message independent of their input satisfies DP trivially.
Otherwise, the $n^{\text{th}}$ user submits their true message, and we assume that analyzer removes from $\vec{Y}^{(\alpha_{j})}$ any truthful messages associated with the first $n - 1$ users.
Denote $n_{l}^{(\alpha_{j})}$ to be the count of $j^{\text{th}}$ coordinates remaining with a particular value $l \in [k]$. If $\vec{x}_{n}^{(\alpha_{j})} = \theta$ and $\vec{x}_{n}^{\prime (\alpha_{j})} = \phi$, we obtain the relationship
\[
\frac{\Pr [ \text{View}_{\mathcal{M}}^{(\alpha_{j})}(\vec{D}) = V_{\alpha_{j}} ] }{\Pr [ \text{View}_{\mathcal{M}}^{(\alpha_{j})}(\vec{D}') 
= V_{\alpha_{j}} ] } = \frac{n_{\theta}^{(\alpha_{j})}}{n_{\phi}^{(\alpha_{j})}}.
\]

\noindent We observe that the counts $n_{\theta}^{(\alpha_{j})}$ and $n_{\phi}^{(\alpha_{j})}$ follow the binomial distributions $\mathsf{N}_\theta \sim {\tt Bin} \dftbig( s, \frac{\gamma}{k} \dftbig) + 1$ and $\mathsf{N}_\phi \sim {\tt Bin} \dftbig( s, \frac{\gamma}{k} \dftbig) $ respectively, where $s$ denotes the number of times that the coordinate $j$ is sampled.
In expectation, $s = (n-1)t/d$, and below we will show that it is close to its expectation:

\begin{align*}
&\Pr_{\mathsf{V}_{\alpha_{j}} \sim \text{View}_{\mathcal{M}}^{(\alpha_{j})}(\vec{D})} \!\left[ \frac{ \Pr [ \text{View}_{\mathcal{M}}^{(\alpha_{j})}(\vec{D}) = \mathsf{V}_{\alpha_{j}} ] }
{ \Pr [ \text{View}_{\mathcal{M}}^{(\alpha_{j})}(\vec{D}') = \mathsf{V}_{\alpha_{j}} ] } \geq e^{\varepsilon'} \right]\\
&= \Pr \!\left[ \frac{\mathsf{N}_{\theta}}{\mathsf{N}_{\phi}} \geq e^{\varepsilon'} \right].
\end{align*}

\noindent We define $c:= \E[\mathsf{N}_{\phi}] = \frac{\gamma}{k} \cdot s$ and split this into the union of two events, $N_\theta \geq c e^{\varepsilon'/2}$ and $N_\phi \leq c e^{-\varepsilon'/2}$.
Applying a Chernoff bound gives:
\begin{align*}
\Pr \!\left[ \frac{\mathsf{N}_{\theta}}{\mathsf{N}_{\phi}} \geq e^{\varepsilon'} \right] &\le \exp \!\left( - \frac{c}{3} \!\left( e^{\varepsilon'/2} - 1 - \frac{1}{c} \right)^{2} \right)\\
&+ \exp \!\left( - \frac{c}{2} \!\left( 1 - e^{- \varepsilon'/2} \right)^{2} \right).
\end{align*}

\noindent We will choose $c \geq \frac{14}{\varepsilon'^2} \log(2t/\delta)$ so that we have:
\[
\exp \hspace{.5mm} (\varepsilon'/2) - 1 - \frac{1}{c}
\ge \frac{\varepsilon'}{2} + \frac{\varepsilon'^2}{8} - \frac{\varepsilon'^2}{14\log(2t/\delta)}
\ge \frac{\varepsilon'}{2}.
\]

\noindent Using $\varepsilon' < 1$, we have:
\[
(1 - \exp \hspace{.5mm} (-\varepsilon'/2)) \ge (1 - \exp \hspace{.5mm} (-1/2))\varepsilon' \ge \frac{\varepsilon'}{\sqrt{7}}.
\]

\noindent Thus we have:
\begin{align*}
\Pr \!\left[ \frac{\mathsf{N}_{\theta}}{\mathsf{N}_{\phi}}  \geq e^{\varepsilon'} \right]
&\le \exp \dftbig( -\frac{c}{3} (\varepsilon'/2)^2 \dftbig) +
\exp \dftbig( -\frac{c}{2}(\varepsilon'/\sqrt{7})^2 \dftbig)\\
&\le 2\exp\left(-\frac{14}{2\varepsilon'^2}\frac{\varepsilon'^2}{7} \log(2t/\delta)\right) \leq \delta/t.
\end{align*}

We now apply another Chernoff bound to show that $s \leq 2\E[s]$, which can be used to give a bound on $\gamma$.
The following calculation proves that $\Pr[s \geq 2\E(s)] \leq \exp(-\E(s)/3)$,
using $\E(s) = (n-1)t/d$:
\[
\Pr[s \geq 2\E(s)]
\leq  \exp \dftbig( -\frac{n-1}{3}t/d \dftbig) \leq  \exp \dftbig( -\frac{n}{3} \dftbig) < \delta/3t,
\]
for all reasonable values of $\delta$.

Substituting these bounds on $s$ and $c$ into $\gamma s/k = c$ along with
$\varepsilon'=\frac{\varepsilon}{2\sqrt{2t \log (1/\delta)}}$
gives:
\begin{align*}
\gamma \geq & \frac{112kt \log(1/\delta) \log(2t/\delta)}{s \varepsilon^{2}} 
\geq \frac{56dk \log(1/\delta)\log(2t/\delta)}{(n-1) \varepsilon^{2}}.
\end{align*}
\end{proof}


\begin{thebibliography}{9}

\bibitem{dworkintro}
C. Dwork. Differential privacy.
In \emph{Proceedings of the 33rd International Colloquium on Automata, Languages and Programming (ICALP)}, pages 1-12, 2006.

\bibitem{balleprivacyblanket}
B. Balle, J. Bell, A. Gasc{\'{o}}n, and K. Nissim. The privacy blanket of the shuffle model. 
In \emph{Annual International Cryptology Conference}, pages 638-667. Springer, Cham, 2019.

\bibitem{mcmahan}
B. McMahan and E. Moore. Communication-efficient learning of deep networks from decentralized data.
In \emph{Artificial Intelligence and Statistics Conference}, pages 1273-1282, 2017.

\bibitem{neural}
M. Abadi, A. Chu, and I. Goodfellow. Deep learning with differential privacy.
In \emph{Proceedings of the 2017 ACM SIGSAC Conference on Computer and Communications Security}, pages 308-318, 2016.

\bibitem{bonawitz}
K. Bonawitz, V. Ivanov, B. Kreuter, A. Marcedone, B. McMahan, S. Patel, D. Ramage, A. Segal, and K. Seth. Practical secure aggregation for privacy-preserving machine learning.
In \emph{Proceedings of the 2017 ACM SIGSAC Conference on Computer and Communications Security}, pages 1175-1191, 2017.

\bibitem{kanonymity}
L. Sweeney. k-anonymity: A model for protecting privacy.
In \emph{International Journal of Uncertainty, Fuzziness and Knowledge-Based Systems}, pages 557-570, 2002.

\bibitem{ldiversity}
A. Machanavajjhala, D. Kifer, J. Gehrke, and M. Venkitasubramaniam. l-diversity: Privacy beyond k-anonymity.
In \emph{ACM Transactions on Knowledge Discovery from Data (TKDD)}, pages 3-es, 2007.

\bibitem{tcloseness}
N. Li, T. Li and S. Venkatasubramanian. t-closeness: Privacy beyond k-anonymity and l-diversity.
In \emph{2007 IEEE 23rd International Conference on Data Engineering}, pages 106-115, 2007.

\bibitem{dwork}
C. Dwork and A. Roth.  The algorithmic foundations of differential privacy. 
\emph{Foundations and Trends in Theoretical Computer Science}, 9(3-4):211-407, 2014.

\bibitem{google}
{\'{U}}. Erlingsson, V. Pihur, and A. Korolova. RAPPOR: Randomized aggregatable privacy-preserving ordinal response.
In \emph{Proceedings of the 2014 ACM SIGSAC Conference on Computer and Communications Security}, pages 1054-1067, 2014.

\bibitem{apple}
Apple's Differential Privacy Team. Learning with privacy at scale.
\emph{Apple Machine Learning Journal}, 1(9), 2017.

\bibitem{microsoft}
B. Ding, J. Kulkarni, and S. Yekhanin. Collecting telemetry data privately.
In \emph{Advances in Neural Information Processing Systems}, pages 3571-3580, 2017.

\bibitem{bittau}
A. Bittau, {\'{U}}. Erlingsson, P. Maniatis, I. Mironov, A. Raghunathan, D. Lie, M. Rudominer, U. Kode, J. Tinnes, and B. Seefeld.
PROCHLO: Strong privacy for analytics in the crowd. In \emph{Proceedings of the 26th Symposium on Operating Systems Principles}, pages 441-459. ACM, 2017.

\bibitem{cheu}
A. Cheu, A. Smith, J. Ullman, D. Zeber, and M. Zhilyaev. Distributed differential privacy via shuffling.
In \emph{Annual International Conference on the Theory and Applications of Cryptographic Techniques}, pages 375-403. Springer, Cham, 2019.

\bibitem{balleimproved}
B. Balle, J. Bell, A. Gasc{\'{o}}n, and K. Nissim. Improved summation from shuffling.
\emph{arXiv preprint arXiv:1909.11225}, 2019.

\bibitem{ishai}
Y. Ishai, E. Kushilevitz, R. Ostrovsky, and A. Sahai. Cryptography from anonymity.
\emph{47th Annual IEEE Symposium on Foundations of Computer Science}, pages 239-248. IEEE, 2006.

\bibitem{ballemulti}
B. Balle, J. Bell, A. Gasc{\'{o}}n, and K. Nissim. Private summation in the multi-message shuffle model.
In \emph{Proceedings of the 2020 ACM SIGSAC Conference on Computer Communications and Security}, pages 657-676. ACM, 2020.

\bibitem{ghazipower}
B. Ghazi, N. Golowich, R. Kumar, R. Pagh, and A. Velingker. On the power of multiple anonymous messages.
In \emph{Advances in Cryptology---EUROCRYPT 2021}, pages 463-488. Springer, Cham, 2021.

\bibitem{ghazidistributed}
B. Ghazi, P. Manurangsi, R. Pagh, and A. Velingker. Private aggregation from fewer anonymous messages.
In \emph{Advances in Cryptology---EUROCRYPT 2020}, pages 798-827. Springer, Cham, 2020.

\bibitem{kairouzextremal}
P. Kairouz, S. Oh, and P. Viswanath. Extremal mechanisms for local differential privacy. 
\emph{The Journal of Machine Learning Research}, 17(1):492-542, 2016.

\bibitem{kairouzdiscrete}
P. Kairouz, K. Bonawitz, and D. Ramage. Discrete distribution estimation under local privacy. 
\emph{Proceedings of the 33rd International Conference on Machine Learning}, 48:2436-2444, 2016.

\bibitem{bhowmick}
A. Bhowmick, J. Duchi, J. Freudiger, G. Kapoor, and R. Rogers. Protection against reconstruction and its applications in private federated learning. 
\emph{arXiv preprint arXiv:1812.00984}, 2018.

\end{thebibliography}
\end{document}